\documentclass[a4paper,11pt]{article}

\usepackage[left=2.5cm,right=2.5cm,top=3cm,bottom=3cm,pdftex]{geometry}

\usepackage{amssymb}
\usepackage{amsmath}
\usepackage{amsthm}

\usepackage{dsfont}
\usepackage{url}
\usepackage{natbib}

\usepackage[utf8]{inputenc}
\usepackage[T1]{fontenc}

\usepackage[pdftex]{graphicx}
\pdfcompresslevel=9 
\DeclareGraphicsExtensions{.png, .pdf, .jpg} 

\usepackage[pdftex, colorlinks, linkcolor=blue, urlcolor=blue, citecolor=blue, breaklinks=true]{hyperref}

\newtheorem{thm}{Theorem}
\newtheorem{lem}{Lemma}
\newtheorem{fol}{Corollary}
\newtheorem{bem}{Remark}

\begin{document}

\begin{center}
\Large
\textbf{Optimizing the allocation of trials to sub-regions in multi-environment crop variety testing}
\end{center}

\begin{center}
\textbf{Maryna Prus and Hans-Peter Piepho}
\end{center}

\begin{abstract}
New crop varieties are extensively tested in multi-environment trials in order to obtain a solid empirical basis for recommendations to farmers. When the target population of environments is large and heterogeneous, a division into sub-regions is often advantageous. When designing such trials, the question arises how to allocate trials to the different sub-regions. We consider a solution to this problem assuming a linear mixed model. We propose an analytical approach for computation of optimal designs for best linear unbiased prediction of genotype effects and pairwise linear contrasts and illustrate the obtained results by a real data example from Indian nation-wide maize variety trials. It is shown that, except in simple cases such as a compound symmetry model, the optimal allocation depends on the variance-covariance structure for genotypic effects nested within sub-regions. 
\end{abstract}

\textbf{Keywords:} Target population of environments, multi-environment trials, mixed models, optimal design

\section{Introduction}

New crop varieties are usually evaluated for their performance in a target population of environments  (TPE), where environments correspond to locations in specific years. This evaluation requires conducting randomized field trials at several environments sampled from the TPE. Such trials are called multi-environment trials (MET). Analysis of MET is routinely performed using linear mixed models comprising effects for genotypes, environments and their interaction (\cite{isik}).

If the TPE is large and can be suitably stratified along geographical borders or agro-ecological zonations, it may be advantageous to subdivide the TPE into sub-regions. If the same set of genotypes is tested at a number of locations in each of the sub-regions, a linear mixed model may be fitted with random genotype-within-subregion effects that allows estimating a genotype's average performance in each sub-region using best linear unbiased prediction (BLUP) (\cite{atl}; \cite{pie1}). If a covariance is assumed between a genotype's performance in different sub-regions, the model allows borrowing strength across sub-regions, meaning that estimates of mean performance in a sub-region becomes more accurate than when based on data from the sub-region alone (\cite{kle}).

Whereas analysis of sub-divided TPE data has received some attention in the recent past, to the best of our knowledge the design of MET when a sub-division is envisaged has not been considered. The design of such trials has gained interest recently in endeavours to integrate trial networks across country borders (e.g., Horizon 2020 project INVITE = INnovations in plant VarIety Testing in Europe).

The design of MET for a sub-divided TPE involves two decisions: (1) The total number of environments at which to conduct the trials and (2) the allocation of this total number of environments to the different sub-regions. This paper is devoted to the second decision.

\section{Model Specification and Prediction}

In this work we use a linear mixed model (LMM) in which observation $l$ of genotype $k$ in location $j$ within the $i$-th sub-region is given by   
\begin{equation}\label{mod}
Y_{ijkl}=\mu_i+\alpha_{ik}+\lambda_{ij}+\gamma_{ijk}+b_{ijl}+\varepsilon_{ijkl} 
\end{equation}
for $l=1,2$, $k=1, \dots, K$, $i=1 \dots P$  and $j=t_{i-1}+1, \dots, t_i$, 
where $t_i=\sum_{s=1}^iJ_s$, $t_0=0$, $K$ denotes the number of genotypes, $P$ is the number of sub-regions, $J_i$ is the number of locations within the $i$-th sub-region and $J=\sum_{i=1}^PJ_i$ is the total number of locations. 
Moreover, $\mu_i$ denotes the mean (fixed) effect of the $i$-th sub-region, $\alpha_{ik}$ is the interaction effect of genotype $k$ in sub-region $i$, $\lambda_{ij}$ is the effect of the $j$-th location within the $i$-th sub-region, $\gamma_{ijk}$ denotes the effect of the $k$-th genotype in the $j$-th location within the $i$-th sub-region, $b_{ijl}$ is the effect of the $l$-th replication in location $j$ in sub-region $i$ and $\varepsilon_{ijkl}$ denotes the observational error. All random effects and observational errors are assumed to have zero mean. The variances are given by $\mathrm{var}(\varepsilon_{ijkl})=\sigma^2$, $\mathrm{var}(\lambda_{ij})=\sigma^2_{\lambda}=\sigma^2v_1$, $\mathrm{var}(\gamma_{ijk})=\sigma^2_{\gamma}=\sigma^2v_2$ and $\mathrm{var}(b_{ijl})=\sigma^2_{b}=\sigma^2v_3$ and the covariance matrix of the genotype effects $\mbox{\boldmath{$\alpha$}}_k=(\alpha_{1k}, \dots, \alpha_{Pk})^\top$ is $\mathrm{Cov}(\mbox{\boldmath{$\alpha$}}_{k})=\sigma^2\mathbf{D}$, where $\mathbf{D}$ is some positive definite matrix.

Our main focus is the prediction of the genotype effects $\mbox{\boldmath{$\alpha$}}=(\mbox{\boldmath{$\alpha$}}_1, \dots, \mbox{\boldmath{$\alpha$}}_K)^\top$ and  the pairwise linear contrasts $\mbox{\boldmath{$\theta$}}^{k,k'}=\mbox{\boldmath{$\alpha$}}_k-\mbox{\boldmath{$\alpha$}}_{k'}$, $k, k'=1, \dots, K$, $k\neq k'$. For given total number of locations $J$ we search for the numbers of locations $J_1, \dots, J_P$ within the sub-regions, which are optimal for the prediction. Optimal designs for the estimation of fixed effects in LMM are well discussed in the literature (see e.g. \cite{fedo} or \cite{ent}). Less has been done for the prediction of random effects: the most general case - hierarchical random coefficient regression models - has been considered by \cite{pru1}. However, due to its more complicated covariance structure, model \eqref{mod} is not a particular case of those models. Therefore, the proposed approach cannot be used here. Also in the recently published work of \cite{pru7} a simpler covariance structure has been assumed.

We measure the performance of the prediction in terms of the mean squared error (MSE) matrix. The MSE matrices of the BLUPs for the genotype effects $\mbox{\boldmath{$\alpha$}}$ and for the pairwise linear contrasts $\mbox{\boldmath{$\theta$}}^{k,k'}$ are provided by the next lemma. 

\begin{lem}\label{lem}
a) The MSE matrix of the BLUP $\hat{\mbox{\boldmath{$\alpha$}}}$ of $\mbox{\boldmath{$\alpha$}}$ is given by
\begin{equation}\label{mse}
\mathrm{Cov}(\hat{\mbox{\boldmath{$\alpha$}}}-\mbox{\boldmath{$\alpha$}})=\sigma^2\left[\frac{1}{K}\mathds{1}_K\mathds{1}_K^\top\otimes \mathbf{D}+ (\mathbb{I}_K-\frac{1}{K}\mathds{1}_K\mathds{1}_K^\top)\otimes\left(\frac{1}{2v_2+1}\mathbf{F}^\top\mathbf{F}+\mathbf{D}^{-1}\right)^{-1}\right],
\end{equation}
where $\mathbf{F}=\textrm{block-diag}(\mathds{1}_{2J_1}, \dots, \mathds{1}_{2J_P})$, $\mathds{1}_{s}$ is the vector of length $s$ with all entries equal to $1$, $\mathbb{I}_s$ is the $s\times s$ identity matrix and $\otimes$ denotes the Kronecker product.

b) The MSE matrix of the BLUP $\hat{\mbox{\boldmath{$\theta$}}}^{k,k'}$ of $\mbox{\boldmath{$\theta$}}^{k,k'}$ is given by 
\begin{equation}\label{mse2}
\mathrm{Cov}(\hat{\mbox{\boldmath{$\theta$}}}^{k,k'}-\mbox{\boldmath{$\theta$}}^{k,k'})=2\sigma^2\left(\frac{1}{2v_2+1}\mathbf{F}^\top\mathbf{F}+\mathbf{D}^{-1}\right)^{-1}.
\end{equation}
\end{lem}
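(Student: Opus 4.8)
The plan is to cast \eqref{mod} in matrix form, exploit the Kronecker structure carried by the genotype index, and split the prediction problem into a part entirely confounded with the fixed effects and a genotype-free BLUP problem. First I would stack the $2KJ$ observations genotype-by-genotype and write $Y=(\mathds{1}_K\otimes\mathbf{F})\mbox{\boldmath{$\mu$}}+(\mathbb{I}_K\otimes\mathbf{F})\mbox{\boldmath{$\alpha$}}+\mbox{\boldmath{$\xi$}}$, where $\mbox{\boldmath{$\mu$}}=(\mu_1,\dots,\mu_P)^\top$ and $\mbox{\boldmath{$\xi$}}$ gathers the remaining random terms. Writing out the incidence matrices of $\lambda_{ij},\gamma_{ijk},b_{ijl},\varepsilon_{ijkl}$ one gets $\mathrm{Cov}(\mbox{\boldmath{$\xi$}})=\sigma^2\mathbf{R}$ where, on the genotype index, the $\lambda$- and $b$-parts of $\mathbf{R}$ carry the factor $\mathds{1}_K\mathds{1}_K^\top$, the $\gamma$-part carries $\mathbb{I}_K$, and the $\varepsilon$-part is $\mathbb{I}_{2KJ}$; the non-genotype Kronecker factors are immaterial, only the $\gamma$-part's matters and it equals $\mathbb{I}_J\otimes\mathds{1}_2\mathds{1}_2^\top$. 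Also $\mathrm{Cov}(\mbox{\boldmath{$\alpha$}})=\sigma^2(\mathbb{I}_K\otimes\mathbf{D})$.

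Next I would introduce the orthogonal idempotents $\mathbf{P}_1=\frac{1}{K}\mathds{1}_K\mathds{1}_K^\top$ and $\mathbf{P}_0=\mathbb{I}_K-\mathbf{P}_1$ on the genotype index. Since $\mathds{1}_K\mathds{1}_K^\top=K\mathbf{P}_1$ and every genotype-index matrix above is one of $\mathbb{I}_K,\mathbf{P}_1,\mathds{1}_K\mathds{1}_K^\top$, the projections $\mathbf{P}_0\otimes\mathbb{I}$ and $\mathbf{P}_1\otimes\mathbb{I}$ split $Y$ into two mutually uncorrelated sub-models. In the $\mathbf{P}_1$-block the $\mbox{\boldmath{$\alpha$}}$-design and the $\mbox{\boldmath{$\mu$}}$-design span the same full column space $\mathrm{col}(\mathbf{F})$, so that part of $\mbox{\boldmath{$\alpha$}}$ is absorbed by the fixed effects: its BLUP is $\mathbf{0}$ and the corresponding block of the prediction-error covariance is the prior covariance $\sigma^2\mathbf{P}_1\otimes\mathbf{D}$, the first term of \eqref{mse} (note $v_1,v_3$ have dropped out). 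In the $\mathbf{P}_0$-block the fixed effects vanish ($\mathbf{P}_0\mathds{1}_K=\mathbf{0}$) and so do the $\lambda$- and $b$-terms ($\mathbf{P}_0\mathds{1}_K\mathds{1}_K^\top\mathbf{P}_0=\mathbf{0}$); on each of the $K-1$ contrast coordinates this leaves the plain BLUP problem of predicting $\tilde{\mbox{\boldmath{$\alpha$}}}$ from $\mathbf{F}\tilde{\mbox{\boldmath{$\alpha$}}}+\tilde{\mbox{\boldmath{$\xi$}}}$ with $\mathrm{Cov}(\tilde{\mbox{\boldmath{$\xi$}}})=\sigma^2\mathbf{\Sigma}_0$, $\mathbf{\Sigma}_0=\mathbb{I}_J\otimes(\mathbb{I}_2+v_2\mathds{1}_2\mathds{1}_2^\top)$, and $\mathrm{Cov}(\tilde{\mbox{\boldmath{$\alpha$}}})=\sigma^2\mathbf{D}$, whose prediction-error covariance is $\sigma^2(\mathbf{F}^\top\mathbf{\Sigma}_0^{-1}\mathbf{F}+\mathbf{D}^{-1})^{-1}$.

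It then remains to evaluate $\mathbf{F}^\top\mathbf{\Sigma}_0^{-1}\mathbf{F}$. By Sherman--Morrison $(\mathbb{I}_2+v_2\mathds{1}_2\mathds{1}_2^\top)^{-1}=\mathbb{I}_2-\frac{v_2}{2v_2+1}\mathds{1}_2\mathds{1}_2^\top$, so $\mathbf{\Sigma}_0^{-1}$ is block-diagonal over locations, and since the columns of $\mathbf{F}$ have disjoint sub-region-wise supports, $\mathbf{F}^\top\mathbf{\Sigma}_0^{-1}\mathbf{F}$ is diagonal with $i$-th entry $J_i\mathds{1}_2^\top(\mathbb{I}_2-\frac{v_2}{2v_2+1}\mathds{1}_2\mathds{1}_2^\top)\mathds{1}_2=\frac{2J_i}{2v_2+1}$, i.e. $\mathbf{F}^\top\mathbf{\Sigma}_0^{-1}\mathbf{F}=\frac{1}{2v_2+1}\mathbf{F}^\top\mathbf{F}$; re-assembling the two blocks via $\mathbb{I}_K=\mathbf{P}_0+\mathbf{P}_1$ gives \eqref{mse}. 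For part (b) I would write $\hat{\mbox{\boldmath{$\theta$}}}^{k,k'}=((\mathbf{e}_k-\mathbf{e}_{k'})^\top\otimes\mathbb{I}_P)\hat{\mbox{\boldmath{$\alpha$}}}$ and apply this linear map to \eqref{mse}, using $(\mathbf{e}_k-\mathbf{e}_{k'})^\top\mathbf{P}_1(\mathbf{e}_k-\mathbf{e}_{k'})=0$ and $(\mathbf{e}_k-\mathbf{e}_{k'})^\top\mathbf{P}_0(\mathbf{e}_k-\mathbf{e}_{k'})=\|\mathbf{e}_k-\mathbf{e}_{k'}\|^2=2$: the first term of \eqref{mse} drops out and the second is scaled by $2$, which is \eqref{mse2}.

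The step I expect to be the main obstacle is a clean justification of the confounding claim in the $\mathbf{P}_1$-block, namely that the prediction-error covariance there equals $\sigma^2\mathbf{P}_1\otimes\mathbf{D}$ regardless of $v_1,v_3$. This follows either from Henderson's mixed model equations, whose relevant diagonal block collapses to the prior variance once the $\mbox{\boldmath{$\mu$}}$-design already spans that direction, or from the general fact that when the random-effect design columns lie in $\mathrm{col}(X)$ the BLUP of that component is $\mathbf{0}$ so its prediction error equals the component itself; the remaining steps are routine Kronecker bookkeeping.
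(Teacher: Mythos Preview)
Your argument is correct and, for part (b), identical to the paper's. For part (a), however, you take a genuinely different route from the paper. The paper casts the model in the general LMM form $\mathbf{Y}=\mathbf{X}\boldsymbol{\beta}+\mathbf{Z}\boldsymbol{\zeta}+\boldsymbol{\epsilon}$ (arriving at the same $\mathbf{X}=\mathds{1}_K\otimes\mathbf{F}$, $\mathbf{Z}=\mathbb{I}_K\otimes\mathbf{F}$, $\mathbf{G}$ and $\mathbf{R}$ as you do) and then invokes Henderson's closed-form MSE expression
\[
\mathrm{Cov}(\hat{\boldsymbol{\zeta}}-\boldsymbol{\zeta})=\bigl(\mathbf{Z}^\top\mathbf{R}^{-1}\mathbf{Z}+\mathbf{G}^{-1}-\mathbf{Z}^\top\mathbf{R}^{-1}\mathbf{X}(\mathbf{X}^\top\mathbf{R}^{-1}\mathbf{X})^{-}\mathbf{X}^\top\mathbf{R}^{-1}\mathbf{Z}\bigr)^{-1},
\]
leaving the ensuing Kronecker algebra implicit. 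You instead bypass the explicit inversion of $\mathbf{R}$ by splitting the problem along the spectral idempotents $\mathbf{P}_1=\tfrac{1}{K}\mathds{1}_K\mathds{1}_K^\top$ and $\mathbf{P}_0=\mathbb{I}_K-\mathbf{P}_1$: the $\mathbf{P}_1$-component is fully confounded with $\boldsymbol{\mu}$ so its BLUP is zero and its MSE equals the prior $\sigma^2\mathbf{P}_1\otimes\mathbf{D}$, while the $\mathbf{P}_0$-component is a fixed-effect-free Bayes regression in which the $\lambda$- and $b$-contributions vanish. This buys you a transparent structural explanation of why $v_1$ and $v_3$ disappear from \eqref{mse} and why the two Kronecker summands arise, at the cost of having to justify the confounding step (which you correctly flag and for which either of your two suggested arguments works). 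The paper's route is shorter to state but hides exactly this insight inside unshown matrix algebra.
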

The proof of the lemma is deferred to the appendix.

Note that in Lemma~\ref{lem} the MSE matrix \eqref{mse2} is the same for all $k,k'$. Therefore, we can fix $k$ and $k'$ and use the simplified notation $\mbox{\boldmath{$\theta$}}$ instead of $\mbox{\boldmath{$\theta$}}^{k,k'}$.

Further, for a given total number of locations, $J$, we search for the numbers of locations within sub-regions, which minimize the MSE matrix \eqref{mse} or \eqref{mse2} of the prediction for the genotype effects or for the pairwise linear contrasts, respectively.

\section{Optimal Design}

For the present optimization problem, we define (exact) designs as follows:
\begin{equation}
\xi:=\left( \begin{array}{ccc} x_1 & ... & x_P \\ J_1 & ... & J_P
\end{array}\right),
\end{equation}
where $x_1, \dots, x_P$ denote the sub-regions.

For analytical purposes we also introduce approximate designs (see e.\,g. \cite{kie}):  
\begin{equation}
\xi:=\left( \begin{array}{ccc} x_1 & ... & x_P \\ w_1 & ... & w_P
\end{array}\right),
\end{equation}
where $w_i=J_i/J$ is the weight of locations within sub-region $i$. For these designs the requirement of integer values of $J_i$ is dropped and only the conditions $\sum_{i=1}^P{w_i}=1$ and $w_i\geq 0$ have to be satisfied.

We define the information matrix as
\begin{equation}
\mathbf{M}(\xi)=\mathrm{diag}(w_1, \dots, w_P)
\end{equation} 
and note that for exact designs the following condition is satisfied:
\begin{equation}
\mathbf{M}(\xi)=\frac{1}{2J}\mathbf{F}^\top\mathbf{F}.
\end{equation}

Then we extend the definitions of MSE matrices \eqref{mse} and \eqref{mse2} with respect to approximate designs and obtain
\begin{equation}\label{mse1}
\mathrm{MSE}_{\alpha}(\xi)=\sigma^2\left[\frac{1}{K}\mathds{1}_K\mathds{1}_K^\top\otimes \mathbf{D}+ (\mathbb{I}_K-\frac{1}{K}\mathds{1}_K\mathds{1}_K^\top)\otimes\left(\frac{2J}{2v_2+1}\mathbf{M}(\xi)+\mathbf{D}^{-1}\right)^{-1}\right]
\end{equation}
and
\begin{equation}\label{mse3}
\mathrm{MSE}_{\theta}(\xi)=2\sigma^2\left(\frac{2J}{2v_2+1}\mathbf{M}(\xi)+\mathbf{D}^{-1}\right)^{-1}.
\end{equation}

\subsection{A-optimal designs}

The A-criterion for prediction may be defined as the trace of the MSE matrix (see e.\,g. \cite{pru1}). For approximate designs this definition can be generalized using the extended MSE matrices \eqref{mse1} and \eqref{mse3}. Then we evaluate (neglecting the constant factor $\frac{\sigma^2}{J}(2v_2+1)$) for the pairwise linear contrasts $\mbox{\boldmath{$\theta$}}$ the criterion
\begin{equation}\label{acr}
\Phi_{A}(\xi)=\mathrm{tr}\left( \mathbf{M}(\xi)+\mathbf{\Delta}^{-1} \right)^{-1},
\end{equation}
where $\mathbf{\Delta}=\frac{2J}{2v_2+1}\mathbf{D}$. 

The A-criterion for the genotype effects $\mbox{\boldmath{$\alpha$}}$ differs from \eqref{acr} only by the constant term $\sigma^2\mathrm{tr}(\mathbf{D})$ and the multiplicator $\frac{\sigma^2}{2J}(K-1)(2v_2+1)$, both of which have no influence on the solution to the optimization. Therefore, optimal designs for the prediction of the genotype effects and the linear contrasts are the same. The next theorem provides the optimality condition for approximate designs.

\begin{thm}\label{oc}
An approximate design $\xi^*$ is A-optimal for the prediction of the genotype effects $\mbox{\boldmath{$\alpha$}}$ and the pairwise linear contrasts $\mbox{\boldmath{$\theta$}}$ iff
\begin{equation}\label{oca}
\mathrm{tr}\left( \mathbf{M}(\xi^*)\left(\mathbf{M}(\xi^*)+\mathbf{\Delta}^{-1}\right)^{-2} \right)\geq \mathbf{e}_i^\top\left( \mathbf{M}(\xi^*)+\mathbf{\Delta}^{-1} \right)^{-2}\mathbf{e}_i,\quad i=1, \dots, P,
\end{equation}
where $\mathbf{e}_i$ is the vector of length $P$ with the $i$-th entry equal to $1$ and all other entries equal to $0$. 

For all $i$ with $w_i>0$ equality holds in \eqref{oca}.
\end{thm}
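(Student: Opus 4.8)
The plan is to treat this as a standard convex optimization problem and apply the equivalence theorem machinery. First I would observe that the A-criterion $\Phi_A(\xi)=\mathrm{tr}(\mathbf{M}(\xi)+\mathbf{\Delta}^{-1})^{-1}$ is a convex function of $\mathbf{M}(\xi)$ on the set of nonnegative diagonal matrices with trace-related constraint $\sum_i w_i=1$; convexity follows because $\mathbf{N}\mapsto\mathrm{tr}(\mathbf{N}^{-1})$ is convex on positive definite matrices and $\mathbf{M}(\xi)+\mathbf{\Delta}^{-1}\succ 0$ whenever $\mathbf{\Delta}\succ 0$ (which holds since $\mathbf{D}$ is positive definite and $2J/(2v_2+1)>0$). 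Hence a design is optimal iff the directional derivative of $\Phi_A$ at $\xi^*$ in the direction of any other design is nonnegative. Because $\mathbf{M}(\xi)$ is diagonal with entries $w_i$, the extreme points of the design space are the one-point designs $\xi_i$ with $\mathbf{M}(\xi_i)=\mathbf{e}_i\mathbf{e}_i^\top$ (a diagonal matrix with a single $1$ in position $i$), and it suffices to check the directional derivatives toward these.

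The key computational step is to differentiate. Using the identity $\frac{d}{dt}\mathrm{tr}(\mathbf{A}+t\mathbf{B})^{-1}\big|_{t=0}=-\mathrm{tr}\big(\mathbf{A}^{-1}\mathbf{B}\mathbf{A}^{-1}\big)=-\mathrm{tr}(\mathbf{A}^{-2}\mathbf{B})$ with $\mathbf{A}=\mathbf{M}(\xi^*)+\mathbf{\Delta}^{-1}$, I would compute the derivative of $\Phi_A$ along the segment $(1-t)\mathbf{M}(\xi^*)+t\,\mathbf{e}_i\mathbf{e}_i^\top$. The increment direction is $\mathbf{B}=\mathbf{e}_i\mathbf{e}_i^\top-\mathbf{M}(\xi^*)$, so the directional derivative equals
\[
-\mathbf{e}_i^\top\big(\mathbf{M}(\xi^*)+\mathbf{\Delta}^{-1}\big)^{-2}\mathbf{e}_i+\mathrm{tr}\!\Big(\mathbf{M}(\xi^*)\big(\mathbf{M}(\xi^*)+\mathbf{\Delta}^{-1}\big)^{-2}\Big).
\]
Requiring this to be nonnegative for every $i$ gives exactly \eqref{oca}. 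For the necessity direction, if \eqref{oca} were violated for some $i$, moving mass toward $\xi_i$ would strictly decrease $\Phi_A$, contradicting optimality; convexity guarantees that nonnegativity of all these directional derivatives is also sufficient. The complementary-slackness statement — equality in \eqref{oca} whenever $w_i>0$ — follows by the usual argument: $\sum_i w_i\,[\text{LHS}-\text{RHS}]_i=0$ identically (multiply \eqref{oca}'s two sides by $w_i$ and sum, using $\sum_i w_i\mathbf{e}_i\mathbf{e}_i^\top=\mathbf{M}(\xi^*)$), so each nonnegative term with positive weight must vanish.

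The main obstacle is not any single hard estimate but rather stating the equivalence-theorem argument cleanly for this slightly nonstandard setup: the "information matrix" here is literally $\mathrm{diag}(w_1,\dots,w_P)$ rather than arising from an integral $\int \mathbf{f}\mathbf{f}^\top d\xi$, so I would need to make explicit that the set of attainable matrices $\{\mathbf{M}(\xi)\}$ is the convex hull of $\{\mathbf{e}_i\mathbf{e}_i^\top\}$ and that $\Phi_A$ is differentiable on its interior, so that the classical Kiefer-type argument (cf.\ \cite{kie}) applies verbatim. A minor technical point to handle is the possibility that some $w_i=0$ at the optimum, where one-sided derivatives must be used; this is exactly what turns the equality in \eqref{oca} into an inequality and is absorbed by the complementary-slackness formulation. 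Once these framing remarks are in place, the differentiation is the short routine calculation sketched above.
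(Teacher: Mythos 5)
Your proof is correct: the convexity of $\mathbf{N}\mapsto\mathrm{tr}(\mathbf{N}^{-1})$, the directional-derivative computation toward the one-point designs $\mathbf{e}_i\mathbf{e}_i^\top$, and the complementary-slackness summation $\sum_i w_i\mathbf{e}_i\mathbf{e}_i^\top=\mathbf{M}(\xi^*)$ together give exactly condition \eqref{oca} with the right inequality direction, and positive definiteness of $\mathbf{\Delta}^{-1}$ keeps the criterion smooth even when some $w_i=0$. The difference from the paper is one of packaging rather than substance: the paper's proof is a one-line reduction, recognizing \eqref{acr} as a Bayesian A-criterion and invoking Theorem~1 of \cite{gla} with $\mathbf{H}=\mathbb{I}_P$, regression functions $\mathbf{f}(x_i)=\mathbf{e}_i$ and finite design region $\mathcal{X}=\{x_1,\dots,x_P\}$, whereas you re-derive that equivalence theorem from first principles for this particular finite, diagonal setting. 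The cited result is itself proved by the convexity/directional-derivative argument you spell out, so the mathematics coincides; what your version buys is self-containedness and an explicit justification that the attainable information matrices form the convex hull of $\{\mathbf{e}_i\mathbf{e}_i^\top\}$ (the point you rightly flag as the only nonstandard framing issue), while the paper's citation buys brevity and places the result within the established Bayesian optimal design framework, which it reuses immediately afterwards for the weighted A-criterion with $\mathbf{H}=\mathbf{L}$.
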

\begin{proof}
The A-criterion \eqref{acr} may be recognized as a particular Bayesian A-criterion. The optimality condition follows from Theorem~1 in \cite{gla} for the linear transformation matrix $\mathbf{H}=\mathbb{I}_P$, the regression functions $\mathbf{f}(x_i)=\mathbf{e}_i$ and the design region $\mathcal{X}=\{x_1, \dots, x_p\}$.
\end{proof}
\begin{fol}
Let $\xi^*$ be an A-optimal design for the prediction of the genotype effects $\mbox{\boldmath{$\alpha$}}$ and the pairwise linear contrasts $\mbox{\boldmath{$\theta$}}$. Let $x_i$
 and $x_{i'}$ be support points of $\xi^*$ ($w_i>0$ and $w_{i'}>0$). Then the following equality holds:
\begin{equation}\label{sp1}
\mathbf{e}_i^\top\left( \mathbf{M}(\xi^*)+\mathbf{\Delta}^{-1} \right)^{-2}\mathbf{e}_i=\mathbf{e}_{i'}^\top\left( \mathbf{M}(\xi^*)+\mathbf{\Delta}^{-1} \right)^{-2}\mathbf{e}_{i'}.
\end{equation}
\end{fol}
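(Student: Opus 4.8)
The plan is to deduce the corollary directly from the equality clause of Theorem~\ref{oc}, so essentially no new work is needed beyond a transitivity argument. First I would observe that the left-hand side of the optimality inequality \eqref{oca}, namely $\mathrm{tr}\left( \mathbf{M}(\xi^*)\left(\mathbf{M}(\xi^*)+\mathbf{\Delta}^{-1}\right)^{-2} \right)$, does not depend on the index $i$; call it $c(\xi^*)$. By the last sentence of Theorem~\ref{oc}, for every index $i$ with $w_i>0$ the inequality \eqref{oca} is in fact an equality, i.e.
\begin{equation*}
\mathbf{e}_i^\top\left( \mathbf{M}(\xi^*)+\mathbf{\Delta}^{-1} \right)^{-2}\mathbf{e}_i = c(\xi^*)\qquad\text{whenever } w_i>0.
\end{equation*}

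Then, given two support points $x_i$ and $x_{i'}$ of $\xi^*$ (so $w_i>0$ and $w_{i'}>0$), applying the displayed identity to $i$ and to $i'$ gives
\begin{equation*}
\mathbf{e}_i^\top\left( \mathbf{M}(\xi^*)+\mathbf{\Delta}^{-1} \right)^{-2}\mathbf{e}_i = c(\xi^*) = \mathbf{e}_{i'}^\top\left( \mathbf{M}(\xi^*)+\mathbf{\Delta}^{-1} \right)^{-2}\mathbf{e}_{i'},
\end{equation*}
which is exactly \eqref{sp1}. Since the whole argument is just "both sides equal the common, index-free trace," there is no genuine obstacle here; the only thing worth checking carefully is that the equality clause of Theorem~\ref{oc} indeed applies to every support point (which it does, by definition of a support point as an index with positive weight), and that the left-hand side of \eqref{oca} is manifestly independent of $i$.
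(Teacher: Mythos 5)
Your argument is correct and is exactly the route the paper intends: the corollary is an immediate consequence of the equality clause of Theorem~\ref{oc}, since the trace on the left-hand side of \eqref{oca} is independent of $i$ and both quadratic forms for support points equal this common value. Nothing is missing.
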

Note that designs for which some weights $w_i$ (and consequently some numbers of locations $J_i$ per sub-region) have zero value are also acceptable in our research.
\vspace{0.3cm}

\textbf{Example~1: Compound symmetry model.}
We consider a (compound symmetry) model with the particular covariance structure of genotype effects $\mathbf{D}=a\,\mathds{1}_P\mathds{1}_P^\top+b\,\mathbb{I}_P$ with positive $b$ and real valued $a$ (for which $\mathbf{D}$ is positive definite). For this model some optimal designs can be obtained explicitly.
\begin{thm}\label{csa}
In the compound symmetry model the (balanced) design $\xi_P$ with $w_i=\frac{1}{P}$, $i=1, \dots, P$, is A-optimal for the prediction of the genotype effects and the pairwise linear contrasts.
\end{thm}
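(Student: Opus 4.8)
The plan is to verify the optimality condition of Theorem~\ref{oc} directly for the balanced design $\xi_P$. First I would note that for $\xi_P$ the information matrix is $\mathbf{M}(\xi_P)=\frac{1}{P}\mathbb{I}_P$, so that $\mathbf{M}(\xi_P)+\mathbf{\Delta}^{-1}=\frac{1}{P}\mathbb{I}_P+\mathbf{\Delta}^{-1}$. Since $\mathbf{\Delta}=\frac{2J}{2v_2+1}\mathbf{D}$ with $\mathbf{D}=a\,\mathds{1}_P\mathds{1}_P^\top+b\,\mathbb{I}_P$, the matrix $\mathbf{\Delta}^{-1}$ is again of compound-symmetry form by the Sherman--Morrison formula, i.e.\ $\mathbf{\Delta}^{-1}=\mu\,\mathbb{I}_P+\nu\,\mathds{1}_P\mathds{1}_P^\top$ for suitable scalars $\mu>0$ and $\nu$. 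Consequently $\mathbf{M}(\xi_P)+\mathbf{\Delta}^{-1}$ lies in the two-dimensional commutative algebra spanned by $\mathbb{I}_P$ and $\mathds{1}_P\mathds{1}_P^\top$, which is closed under inversion and multiplication; hence $\mathbf{N}:=\left(\mathbf{M}(\xi_P)+\mathbf{\Delta}^{-1}\right)^{-2}$ is also of the form $c\,\mathbb{I}_P+d\,\mathds{1}_P\mathds{1}_P^\top$.

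The key observation is then that any such matrix has all diagonal entries equal (to $c+d$). Therefore $\mathbf{e}_i^\top\mathbf{N}\mathbf{e}_i$ does not depend on $i$ and equals the average of the diagonal entries, namely $\frac{1}{P}\mathrm{tr}(\mathbf{N})=\mathrm{tr}\!\left(\tfrac{1}{P}\mathbb{I}_P\,\mathbf{N}\right)=\mathrm{tr}\!\left(\mathbf{M}(\xi_P)\,\mathbf{N}\right)$. This is precisely equality in \eqref{oca} for every $i=1,\dots,P$. In particular the inequality \eqref{oca} holds for all $i$, and equality holds for all $i$ with $w_i>0$ (here, all $i$), so Theorem~\ref{oc} yields that $\xi_P$ is A-optimal for the prediction of $\mbox{\boldmath{$\alpha$}}$ and $\mbox{\boldmath{$\theta$}}$.

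I do not expect a genuine obstacle here: the argument is essentially that the compound-symmetry structure of $\mathbf{D}$ renders the whole optimization problem invariant under permutations of the sub-regions, so the symmetric design satisfies the equivalence condition automatically. The only points requiring a little care are checking invertibility of $\mathbf{M}(\xi_P)+\mathbf{\Delta}^{-1}$, which is immediate since $\mathbf{M}(\xi_P)$ is positive definite and $\mathbf{\Delta}^{-1}$ is positive definite, and writing down the scalars $\mu,\nu$ (and then $c,d$) explicitly if desired --- a routine Sherman--Morrison computation that I would relegate to a single line. As an alternative route, one could instead invoke convexity of $\Phi_{A}(\xi)=\mathrm{tr}\left(\mathbf{M}(\xi)+\mathbf{\Delta}^{-1}\right)^{-1}$ in $(w_1,\dots,w_P)$ together with the permutation symmetry of the criterion to conclude that a minimizer is attained at the barycentre $w_i=1/P$; but the equivalence-theorem argument above is shorter given that Theorem~\ref{oc} is already available.
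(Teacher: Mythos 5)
Your proposal is correct and follows essentially the same route as the paper: verify the equivalence condition of Theorem~\ref{oc} for $\xi_P$ by noting that $\mathbf{M}(\xi_P)=\frac{1}{P}\mathbb{I}_P$ and that the compound-symmetry structure forces all diagonal entries of $\left(\mathbf{M}(\xi_P)+\mathbf{\Delta}^{-1}\right)^{-2}$ to coincide, giving equality in \eqref{oca} for every $i$. You merely spell out the details (closure of the $\{\mathbb{I}_P,\mathds{1}_P\mathds{1}_P^\top\}$ algebra under inversion, and the identity $\mathbf{e}_i^\top\mathbf{N}\mathbf{e}_i=\frac{1}{P}\mathrm{tr}(\mathbf{N})=\mathrm{tr}(\mathbf{M}(\xi_P)\mathbf{N})$) that the paper leaves as "easily verified".
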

\begin{proof}
For the balanced design $\xi_P$ the information matrix is given by $\mathbf{M}(\xi_P)=\frac{1}{P}\,\mathbb{I}_P$. 
Then it can be easily verified that all diagonal entries of the matrix $\left(\mathbf{M}(\xi_P)+\mathbf{\Delta}^{-1}\right)^{-2}$ 
are the same, which leads to equalities in \eqref{oca} for all $i=1, \dots, P$.
\end{proof} 




\subsection{Optimal designs with respect to weighted A-criterion}

In this section we focus on the prediction of the pairwise contrasts $\mbox{\boldmath{$\theta$}}$. We define the weighted A-criterion as the weighted sum across all sub-regions of the variances of the differences between the predicted and the real contrasts:  
\begin{equation}
\Phi_{A_w, \theta}=\sum_{i=1}^P{\ell_i\mathrm{var}(\hat{\theta}_i-\theta_i)},
\end{equation}
where $\ell_1, \dots, \ell_P$ denote coefficients, which are related to the sub-regions. One possible choice is the size of the sub-regions. Alternatively, equal weight may be given to each sub-region, meaning that $\ell_i = \frac{1}{P}$ for all $i$. (In this case the weighted A-criterion coincides with the standard one).
Then we extend this definition with respect to approximate designs and obtain (neglecting the constant multiplicator $\frac{\sigma^2}{J}(2v_2+1)$) the following criterion:
\begin{equation}\label{wacr}
\Phi_{A_w, \theta}(\xi)=\mathrm{tr}\left(\mathbf{L} \left(\mathbf{M}(\xi)+\mathbf{\Delta}^{-1} \right)^{-1}\right),
\end{equation}
where $\mathbf{L}=\mathrm{diag}(\ell_1, \dots, \ell_P)$. The next theorem presents the optimality condition for approximate designs.

\begin{thm}
A design $\xi^*$ is optimal for the prediction of pairwise linear contrasts $\mbox{\boldmath{$\theta$}}$ with respect to the weighted A-criterion iff
\begin{eqnarray}\label{oca1}
&&\mathrm{tr}\left( \mathbf{M}(\xi^*)\left(\mathbf{M}(\xi^*)+\mathbf{\Delta}^{-1}\right)^{-1} \mathbf{L}\left(\mathbf{M}(\xi^*)+\mathbf{\Delta}^{-1}\right)^{-1}\right)\nonumber\\
&&\geq \mathbf{e}_i^\top\left(\mathbf{M}(\xi^*)+\mathbf{\Delta}^{-1}\right)^{-1}\mathbf{L}\left(\mathbf{M}(\xi^*)+\mathbf{\Delta}^{-1}\right)^{-1}\mathbf{e}_i,\quad i=1, \dots, P.
\end{eqnarray}
For all $i$ with $w_i>0$ equality holds in \eqref{oca1}.
\end{thm}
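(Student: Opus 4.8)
The plan is to derive the optimality condition as an instance of the general equivalence theorem for Bayesian A-optimality, exactly as was done in the proof of Theorem~\ref{oc}, but now tracking the extra weighting matrix $\mathbf{L}$. First I would observe that the criterion \eqref{wacr} is again a (weighted) Bayesian A-criterion: with design region $\mathcal{X}=\{x_1,\dots,x_P\}$, regression functions $\mathbf{f}(x_i)=\mathbf{e}_i$, prior precision $\mathbf{\Delta}^{-1}$, and linear transformation $\mathbf{H}=\mathbb{I}_P$, the posterior covariance is $\bigl(\mathbf{M}(\xi)+\mathbf{\Delta}^{-1}\bigr)^{-1}$, and $\Phi_{A_w,\theta}(\xi)=\mathrm{tr}\bigl(\mathbf{L}(\mathbf{M}(\xi)+\mathbf{\Delta}^{-1})^{-1}\bigr)$ is the trace of that posterior covariance weighted by $\mathbf{L}$. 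Invoking Theorem~1 in \cite{gla} with this choice of $\mathbf{H}$ and with the weight matrix $\mathbf{L}$ in place of the identity then yields \eqref{oca1} directly, the right-hand side being the "sensitivity function" $\mathbf{f}(x_i)^\top(\mathbf{M}(\xi^*)+\mathbf{\Delta}^{-1})^{-1}\mathbf{L}(\mathbf{M}(\xi^*)+\mathbf{\Delta}^{-1})^{-1}\mathbf{f}(x_i)$ evaluated at each support candidate, and the left-hand side being its average $\sum_i w_i^*\,(\cdots)$, which simplifies to the trace expression shown because $\sum_i w_i^*\mathbf{e}_i\mathbf{e}_i^\top=\mathbf{M}(\xi^*)$.

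Alternatively, if one prefers a self-contained derivation, I would argue via directional derivatives. The criterion $\Phi_{A_w,\theta}$ is convex in $\mathbf{M}$ (the map $\mathbf{N}\mapsto\mathrm{tr}(\mathbf{L}\mathbf{N}^{-1})$ is convex on positive definite matrices for $\mathbf{L}\succeq 0$), and the feasible set of information matrices $\{\mathrm{diag}(w_1,\dots,w_P):w_i\ge 0,\ \sum w_i=1\}$ is a simplex whose extreme points are the $\mathbf{e}_i\mathbf{e}_i^\top$. Computing the directional derivative of $\Phi_{A_w,\theta}$ at $\xi^*$ towards the one-point design at $x_i$ gives, using $\partial\,\mathrm{tr}(\mathbf{L}\mathbf{N}^{-1})=-\mathrm{tr}(\mathbf{N}^{-1}\mathbf{L}\mathbf{N}^{-1}\,d\mathbf{N})$ with $\mathbf{N}=\mathbf{M}(\xi^*)+\mathbf{\Delta}^{-1}$ and $d\mathbf{N}=\mathbf{e}_i\mathbf{e}_i^\top-\mathbf{M}(\xi^*)$,
\begin{equation}
\left.\frac{\partial}{\partial\epsilon}\right|_{\epsilon=0^+}\Phi_{A_w,\theta}\bigl((1-\epsilon)\xi^*+\epsilon\,\delta_{x_i}\bigr)=\mathrm{tr}\bigl(\mathbf{M}(\xi^*)\mathbf{N}^{-1}\mathbf{L}\mathbf{N}^{-1}\bigr)-\mathbf{e}_i^\top\mathbf{N}^{-1}\mathbf{L}\mathbf{N}^{-1}\mathbf{e}_i.
\end{equation}
By the standard convex-optimization equivalence theorem, $\xi^*$ is optimal iff all these directional derivatives are nonnegative, which is precisely \eqref{oca1}; and for any support point $x_i$ (i.e.\ $w_i^*>0$) the derivative must vanish, giving equality. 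The averaging identity $\sum_i w_i^*\,\mathbf{e}_i^\top\mathbf{N}^{-1}\mathbf{L}\mathbf{N}^{-1}\mathbf{e}_i=\mathrm{tr}(\mathbf{M}(\xi^*)\mathbf{N}^{-1}\mathbf{L}\mathbf{N}^{-1})$ then forces the equality statement at all support points from the inequality, since a weighted average of quantities each bounded above by the average can equal it only if each active term meets the bound.

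The only genuine subtlety — and the step I would treat most carefully — is checking that $\mathbf{L}$ being merely positive semidefinite (some $\ell_i$ could in principle be zero, and certainly $\mathbf{L}$ need not be $\mathbb{I}_P$) still lets us apply the cited equivalence theorem, which is usually stated for a transformation $\mathbf{H}$ with $\mathbf{H}^\top\mathbf{H}$ of full rank or for the plain A-criterion. I would handle this either by writing $\mathbf{L}=\mathbf{H}^\top\mathbf{H}$ with $\mathbf{H}=\mathbf{L}^{1/2}$ and noting that $\mathbf{M}(\xi)+\mathbf{\Delta}^{-1}$ is nonsingular for every design because of the prior term $\mathbf{\Delta}^{-1}\succ 0$ (so the criterion is finite and differentiable on the whole feasible set, and no rank condition on $\mathbf{H}$ is needed), or simply by running the direct directional-derivative argument above, which never requires $\mathbf{L}$ to be invertible. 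Everything else is the routine matrix calculus of differentiating $\mathrm{tr}(\mathbf{L}\mathbf{N}^{-1})$ and the convexity bookkeeping, so I would state it briefly and refer to \cite{gla} for the general template, mirroring the brevity of the proof of Theorem~\ref{oc}.
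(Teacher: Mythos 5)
Your proposal is correct, and its first route is essentially the paper's own proof: the paper likewise disposes of the statement in one line by recognizing \eqref{wacr} as a particular Bayesian linear criterion and invoking Theorem~1 of \cite{gla} — the only cosmetic difference being that the paper plugs in the linear transformation matrix $\mathbf{H}=\mathbf{L}$ directly, whereas you keep $\mathbf{H}=\mathbb{I}_P$ and carry $\mathbf{L}$ as a weight (equivalently $\mathbf{L}=\mathbf{H}^\top\mathbf{H}$ with $\mathbf{H}=\mathbf{L}^{1/2}$), which is just a matter of how the cited result is parameterized. What you add beyond the paper is the self-contained directional-derivative argument, and it checks out: $\mathbf{N}\mapsto\mathrm{tr}(\mathbf{L}\mathbf{N}^{-1})$ is convex, the derivative at $\xi^*$ towards $\delta_{x_i}$ is $\mathrm{tr}\bigl(\mathbf{M}(\xi^*)\mathbf{N}^{-1}\mathbf{L}\mathbf{N}^{-1}\bigr)-\mathbf{e}_i^\top\mathbf{N}^{-1}\mathbf{L}\mathbf{N}^{-1}\mathbf{e}_i$ with $\mathbf{N}=\mathbf{M}(\xi^*)+\mathbf{\Delta}^{-1}$, one-point designs are the extreme directions, and the averaging identity $\sum_i w_i^*\mathbf{e}_i\mathbf{e}_i^\top=\mathbf{M}(\xi^*)$ forces equality at support points. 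This buys two things the paper's citation does not: the proof no longer depends on the precise hypotheses of the quoted theorem (in particular no invertibility or rank condition on $\mathbf{L}$ is needed, since $\mathbf{\Delta}^{-1}\succ 0$ keeps $\mathbf{N}$ nonsingular for every design), and it makes explicit why equality must hold whenever $w_i>0$, which the paper states without comment.
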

\begin{proof}
The weighted A-criterion \eqref{wacr} may be recognized as a particular Bayesian linear criterion. Optimality condition \eqref{oca1} follows from Theorem~1 in \cite{gla} for the linear transformation matrix $\mathbf{H}=\mathbf{L}$.  
\end{proof}
\begin{fol}
Let $\xi^*$ be an optimal design with respect to the weighted A-criterion for the prediction of the genotype effects $\mbox{\boldmath{$\alpha$}}$ and the pairwise linear contrasts $\mbox{\boldmath{$\theta$}}$. Let $x_i$
 and $x_{i'}$ be support points of $\xi^*$ ($w_i>0$ and $w_{i'}>0$). Then the following equality holds:
\begin{equation}\label{sp2}
\mathbf{e}_i^\top\left(\mathbf{M}(\xi^*)+\mathbf{\Delta}^{-1}\right)^{-1} \mathbf{L}\left(\mathbf{M}(\xi^*)+\mathbf{\Delta}^{-1}\right)^{-1}\mathbf{e}_i=\mathbf{e}_{i'}^\top\left(\mathbf{M}(\xi^*)+\mathbf{\Delta}^{-1}\right)^{-1} \mathbf{L}\left(\mathbf{M}(\xi^*)+\mathbf{\Delta}^{-1}\right)^{-1}\mathbf{e}_{i'}.
\end{equation}
\end{fol}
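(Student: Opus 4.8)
The plan is to obtain \eqref{sp2} as an immediate consequence of the equality part of the theorem just proved. First I would invoke that theorem: it guarantees that equality holds in \eqref{oca1} at every index $i$ for which $w_i>0$. Since $x_i$ and $x_{i'}$ are assumed to be support points of $\xi^*$, i.e.\ $w_i>0$ and $w_{i'}>0$, I would apply this equality at the two indices $i$ and $i'$ separately.

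Writing $\mathbf{N}:=\left(\mathbf{M}(\xi^*)+\mathbf{\Delta}^{-1}\right)^{-1}$ for brevity, the equality version of \eqref{oca1} reads
\[
\mathrm{tr}\!\left(\mathbf{M}(\xi^*)\,\mathbf{N}\,\mathbf{L}\,\mathbf{N}\right)=\mathbf{e}_i^\top\mathbf{N}\,\mathbf{L}\,\mathbf{N}\,\mathbf{e}_i
\qquad\text{and}\qquad
\mathrm{tr}\!\left(\mathbf{M}(\xi^*)\,\mathbf{N}\,\mathbf{L}\,\mathbf{N}\right)=\mathbf{e}_{i'}^\top\mathbf{N}\,\mathbf{L}\,\mathbf{N}\,\mathbf{e}_{i'}.
\]
The key observation is simply that the left-hand side is a single scalar that carries no free index; hence the two right-hand sides must coincide, which is precisely \eqref{sp2}. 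Equating $\mathbf{e}_i^\top\mathbf{N}\mathbf{L}\mathbf{N}\mathbf{e}_i=\mathbf{e}_{i'}^\top\mathbf{N}\mathbf{L}\mathbf{N}\mathbf{e}_{i'}$ finishes the argument.

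I do not expect any genuine obstacle here: the corollary is a one-step deduction by transitivity through the common trace term, and the only point worth spelling out — and it is immediate from the form of \eqref{oca1} — is that this trace expression acts as a common value attained by the diagonal quadratic forms $\mathbf{e}_i^\top\mathbf{N}\mathbf{L}\mathbf{N}\mathbf{e}_i$ at all support points of $\xi^*$.
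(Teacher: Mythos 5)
Your argument is correct and is exactly the reasoning the paper intends: the corollary is an immediate consequence of the equality part of the preceding theorem, since the trace term in \eqref{oca1} is a fixed scalar attained as the common value of the quadratic forms at all support points. No gap to report.
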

For the weighted A-criterion the optimal designs are not as easy to guess as for the standard A-criterion. Specifically, it is worth mentioning that the design $\xi_L$ with $w_i=\ell_i/\ell$, $\ell=\sum_{i=1}^P{\ell_i}$, which intuitively could be a solution of the optimization problem, is in general not optimal (see the real data example below).

\subsection{Enforcing the same efficiency in each sub-region}
For some studies it is required that the variances of the differences between the real and the predicted effects are the same for all sub-regions. For the model under investigation (model \eqref{mod}) this condition is given by
\begin{equation}\label{con}
\mathbf{e}_i^\top\left( \mathbf{M}(\xi^*)+\mathbf{\Delta}^{-1} \right)^{-1}\mathbf{e}_i=\mathbf{e}_{i'}^\top\left( \mathbf{M}(\xi^*)+\mathbf{\Delta}^{-1} \right)^{-1}\mathbf{e}_{i'},\quad i,i'=1, \dots, P.
\end{equation}
Under this condition the numbers of locations $J_1, \dots, J_P$ can be obtained as a solution of a system of $P-1$ equations (for example fix $i=1$ and $i'=2, \dots, P$) with $P-1$ unknown variables ($J_P=J-\sum_{i=1}^{P-1}J_i$) and no further optimization is needed.

%

%
\begin{bem}
For the compound symmetry model the balanced design is a solution of \eqref{con}.
\end{bem}
\begin{bem}
Not all A-optimal designs for the prediction of the genotype effects and the pairwise linear contrasts satisfy condition \eqref{con}.
\end{bem}
\begin{thm}
Let the covariance matrix of random effects be diagonal. Let $\tilde{\xi}$ be a design satisfying condition \eqref{con}. Then $\tilde{\xi}$ is A-optimal for the prediction of the genotype effects and the pairwise linear contrasts.
\end{thm}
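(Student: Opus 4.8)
The plan is to verify the A-optimality condition \eqref{oca} of Theorem~\ref{oc} directly, exploiting the fact that diagonality of $\mathbf{D}$ makes every matrix occurring in \eqref{oca} diagonal. First I would note that if $\mathbf{D}$ is diagonal then so is $\mathbf{\Delta}=\frac{2J}{2v_2+1}\mathbf{D}$, hence so is $\mathbf{\Delta}^{-1}=\mathrm{diag}(\delta_1^{-1},\dots,\delta_P^{-1})$. Since $\mathbf{M}(\tilde{\xi})=\mathrm{diag}(w_1,\dots,w_P)$ is diagonal by definition, the matrix $\mathbf{N}:=\mathbf{M}(\tilde{\xi})+\mathbf{\Delta}^{-1}=\mathrm{diag}(n_1,\dots,n_P)$ with $n_i=w_i+\delta_i^{-1}>0$ is diagonal, and therefore so are $\mathbf{N}^{-1}$ and $\mathbf{N}^{-2}$. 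In particular $\mathbf{e}_i^\top\mathbf{N}^{-1}\mathbf{e}_i=n_i^{-1}$, so condition \eqref{con} for $\tilde{\xi}$ is equivalent to $n_1=\dots=n_P=:n$ for a common value $n>0$.

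Next I would evaluate both sides of \eqref{oca} at $\tilde{\xi}$. On the one hand $\mathbf{e}_i^\top\mathbf{N}^{-2}\mathbf{e}_i=n_i^{-2}=n^{-2}$ for every $i$. On the other hand, using $\mathbf{M}(\tilde{\xi})\,\mathbf{N}^{-2}=\mathrm{diag}(w_1 n^{-2},\dots,w_P n^{-2})$ together with $\sum_{i=1}^P w_i=1$,
\[
\mathrm{tr}\!\left(\mathbf{M}(\tilde{\xi})\left(\mathbf{M}(\tilde{\xi})+\mathbf{\Delta}^{-1}\right)^{-2}\right)=\sum_{i=1}^P\frac{w_i}{n^2}=\frac{1}{n^2}.
\]
Hence both sides of \eqref{oca} equal $n^{-2}$, so the inequality holds with equality for all $i=1,\dots,P$; in particular equality holds for those $i$ with $w_i>0$. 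By Theorem~\ref{oc} this shows that $\tilde{\xi}$ is A-optimal for the prediction of the genotype effects $\mbox{\boldmath{$\alpha$}}$ and the pairwise linear contrasts $\mbox{\boldmath{$\theta$}}$.

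I do not expect a genuine obstacle: the argument is essentially the observation that diagonality collapses the quadratic forms and the trace in \eqref{oca} to scalars built from the $n_i$, and that \eqref{con} forces these scalars to coincide. The only point needing a moment's care is that \eqref{con}, and hence the resulting equality in \eqref{oca}, is asserted for all indices $i$ irrespective of whether $w_i$ vanishes; this is exactly what Theorem~\ref{oc} requires (equality on the support, and merely the inequality off it), so designs satisfying \eqref{con} with some zero weights are covered as well.
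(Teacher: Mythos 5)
Your proof is correct and follows essentially the same route as the paper: diagonality of $\mathbf{D}$ makes $\mathbf{M}(\tilde{\xi})+\mathbf{\Delta}^{-1}$ diagonal, condition \eqref{con} then forces equal diagonal entries, and the equivalence condition \eqref{oca} of Theorem~\ref{oc} holds with equality for all $i$. You merely make explicit the trace computation $\mathrm{tr}\bigl(\mathbf{M}(\tilde{\xi})\mathbf{N}^{-2}\bigr)=\sum_i w_i/n^2=1/n^2$ using $\sum_i w_i=1$, which the paper leaves implicit.
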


\begin{proof}
If the matrix $\mathbf{D}$ is diagonal, the matrix $\left(\mathbf{M}(\xi)+\mathbf{\Delta}^{-1}\right)^{-1}$ is also diagonal. Then the entries of the matrix $\left(\mathbf{M}(\xi)+\mathbf{\Delta}^{-1}\right)^{-2}$ are equal to the squared entries of $\left(\mathbf{M}(\xi)+\mathbf{\Delta}^{-1}\right)^{-1}$. If the condition \eqref{con} is satisfied for a design $\tilde{\xi}$, the diagonal entries of $\left(\mathbf{M}(\xi)+\mathbf{\Delta}^{-1}\right)^{-2}$ have to be all the same, which leads to equalities in the optimality condition \eqref{oca} for all $i=1, \dots, P$.
\end{proof}

\begin{fol}
Let the covariance matrix of random effects be diagonal. Let $\tilde{\xi}$ be an A-optimal design for the prediction of the genotype effects and the pairwise linear contrasts with $w_i>0$ for all $i=1, \dots, P$. Then $\tilde{\xi}$ satisfies condition \eqref{con}.
\end{fol}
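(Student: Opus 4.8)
The plan is to reverse the reasoning of the preceding theorem, exploiting that in the diagonal case every matrix occurring in the optimality condition \eqref{oca} is diagonal. First I would note that, since $\mathbf{D}$ is diagonal, $\mathbf{\Delta}=\frac{2J}{2v_2+1}\mathbf{D}$ and $\mathbf{\Delta}^{-1}$ are diagonal, and since $\mathbf{M}(\tilde{\xi})=\mathrm{diag}(w_1,\dots,w_P)$ is diagonal, so are $\mathbf{M}(\tilde{\xi})+\mathbf{\Delta}^{-1}$ and its inverse powers. In particular the $i$-th diagonal entry of $\left(\mathbf{M}(\tilde{\xi})+\mathbf{\Delta}^{-1}\right)^{-2}$ is the square of the $i$-th diagonal entry of $\left(\mathbf{M}(\tilde{\xi})+\mathbf{\Delta}^{-1}\right)^{-1}$. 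Writing $d_i=\mathbf{e}_i^\top\left(\mathbf{M}(\tilde{\xi})+\mathbf{\Delta}^{-1}\right)^{-1}\mathbf{e}_i$, positive definiteness of $\mathbf{M}(\tilde{\xi})+\mathbf{\Delta}^{-1}$ (which holds because $\mathbf{\Delta}^{-1}$ is positive definite and the $w_i$ are nonnegative) yields $d_i>0$ for all $i$.

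Next I would use that $\tilde{\xi}$ is A-optimal with $w_i>0$ for every $i$. By Theorem~\ref{oc} (equivalently, by \eqref{sp1} applied to all pairs of support points), equality holds in \eqref{oca} for every $i=1,\dots,P$; since the left-hand side of \eqref{oca} does not depend on $i$, this means that $\mathbf{e}_i^\top\left(\mathbf{M}(\tilde{\xi})+\mathbf{\Delta}^{-1}\right)^{-2}\mathbf{e}_i$ takes the same value for all $i$, i.e. all diagonal entries of $\left(\mathbf{M}(\tilde{\xi})+\mathbf{\Delta}^{-1}\right)^{-2}$ coincide. By the first step this is exactly the statement that $d_i^2$ is independent of $i$.

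Finally, since each $d_i$ is strictly positive, taking positive square roots shows that $d_i$ itself is independent of $i$, which is precisely condition \eqref{con}, and the proof concludes.

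The argument is short, and the only place where a little care is needed — the mild ``obstacle'' — is the inference from ``equal squares'' to ``equal values'': this is legitimate here only because the $d_i$ are strictly positive, and it is also the only step that genuinely uses the diagonality hypothesis. For a non-diagonal $\mathbf{D}$ the diagonal entries of $\left(\mathbf{M}(\tilde{\xi})+\mathbf{\Delta}^{-1}\right)^{-2}$ are no longer the squares of those of $\left(\mathbf{M}(\tilde{\xi})+\mathbf{\Delta}^{-1}\right)^{-1}$, so equality in \eqref{oca} need not force equality in \eqref{con}, consistent with the remark above that not all A-optimal designs satisfy \eqref{con}.
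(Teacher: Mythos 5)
Your proof is correct and follows essentially the same route the paper intends for this corollary: it reverses the argument of the preceding theorem, using the equality clause of Theorem~\ref{oc} (equality in \eqref{oca} at every support point) together with the fact that in the diagonal case the diagonal entries of $\left(\mathbf{M}(\tilde{\xi})+\mathbf{\Delta}^{-1}\right)^{-2}$ are the squares of the (strictly positive) diagonal entries of $\left(\mathbf{M}(\tilde{\xi})+\mathbf{\Delta}^{-1}\right)^{-1}$, so equal squares give equal values and hence condition \eqref{con}. The attention you give to positivity when passing from $d_i^2$ to $d_i$ is exactly the point that makes the converse valid, and your closing remark correctly explains why this fails for non-diagonal $\mathbf{D}$.
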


\section{Real Data Example}

We here consider variance components from a study on maize variety trials in India with five agroecological sub-regions. The dataset comprises four maturity groups of maize. Here, we consider only the extra-early maturity group (\cite{kle}, Tables 6 and 7). Based on the variance components reported in the paper, we derived variance components to be used in our design problems.

In Table~1 we summarize the variance components in the model under investigation (Model (1)) and how we determined these from the parameter estimates for the model considered by \cite{kle}. According to this table the adjusted covariance matrix $\mathbf{\Delta}$ form formula \eqref{acr} may be computed as
\begin{equation}
\mathbf{\Delta}=\frac{J}{493-\sigma^2/2}(\mathbf{V}+31\,\mathds{1}_5\mathds{1}_5^\top+18\,\mathbb{I}_5).
\end{equation}

\begin{table}
\caption{Variance components used in this example (column "Variance in Model (1)") and how they are derived from the variance parameter estimates in \cite{kle}}
\vspace{0.2cm}
\begin{tabular}{|p{2.6cm}|p{1.7cm}|p{3.65cm}|p{3.6cm}|p{4.0cm}|}
\hline
Effect & Model (1) & Model in Kleinknecht \textit{et al.} (2013) & Variance in Model (1) & Variance in Kleinknecht \textit{et al.} (2013)\\
\hline
Zone + mean & $\mu_i$  & $\mu+z_h+za_{hk}+a_k$ & fixed & 426+107+153 \\
\hline
Genotype$\times$zone & $\alpha_{ik}$ & $g_{i(h)}+gza_{ihk}+ga_{ik}$  & $\sigma^2\mathbf{D}$ for $\alpha_k$ & $\mathbf{V}+31\,\mathds{1}_5\mathds{1}_5^\top+18\,\mathbb{I}_5$\\
\hline
Location$\times$zone & $\lambda_{ij}+b_{ijl}$ & $l_{jh}+la_{jhk}$ & $\sigma^2_{\lambda}+\sigma^2_b=\sigma^2(v_1+v_3)$ & 1129+1000 \\
\hline 
Gen$\times$loc$\times$zone \newline + Obs errors & $\gamma_{ijk}+\varepsilon_{ijkl}$ & $gl_{ijh}+e_{hijk}$ & $\sigma^2_{\gamma}+\sigma^2=\sigma^2(v_2+1)$ & 160+333\\
\hline
\end{tabular}
\end{table}

We consider both the standard and the weighted A-criterion for first-order factor-analytic (FA) and compound symmetry (CS) variance-covariance structures, which were discussed in \cite{kle} (see also \cite{pie2} for FA models).

\subsection{Standard A-criterion}

For the first-order factor-analytic model we take the covariance matrix $\mathbf{V}$ from Table~6 in \cite{kle} (right part):
\begin{equation}\label{fa}
\mathbf{V}=\left(\begin{array}{ccccc}
	567 & 254 & 239 & 485 & 328 \\ 254 & 155 & 118 & 240 & 162 \\ 239 & 118 & 155 & 226 & 153 \\ 485 & 240 & 226 & 488 & 310 \\ 328 & 162 & 153 & 310 & 215
\end{array}\right).
\end{equation}

Table~2 summarizes results for optimal designs in the first-order factor-analytic model. As we can see in Table~2, optimal designs depend on both the total number of locations $J$ and the error variance $\sigma^2$. 
\begin{table}
\caption{Optimal numbers of locations per sub-region with respect to standard A-criterion in FA model for different values of the total number of locations $J$ and the error variance $\sigma^2$}
\vspace{0.2cm}
\centering
\begin{tabular}{|c|c|ccccc|ccccc|}
\hline
$J$ & $\sigma^2$ & \multicolumn{5}{c|}{Approximate design} & \multicolumn{5}{c|}{Exact design}\\
 & & $w_1$ & $w_2$ & $w_3$ & $w_4$ & $w_5$ & $J_1$ & $J_2$ & $J_3$ & $J_4$ & $J_5$\\
\hline
20 & 50 & 0.33 & 0.13 & 0.18 & 0.31 & 0.04 & 7 & 3 & 3 & 6 & 1\\
 & 200 & 0.31 & 0.15 & 0.19 & 0.29 & 0.06 & 6 & 3 & 4 & 6 & 1\\
 & 400 & 0.29 & 0.16 & 0.20 & 0.27 & 0.09 & 6 & 3 & 4 & 5 & 2\\
\hline
40 & 50 & 0.27 & 0.17 & 0.20 & 0.25 & 0.10 & 11 & 7 & 8 & 10 & 4\\
 & 200 & 0.26 & 0.18 & 0.20 & 0.24 & 0.12 & 10 & 7 & 8 & 10 & 5\\
 & 400 & 0.25 & 0.19 & 0.21 & 0.23 & 0.13 & 10 & 8 & 8 & 9 & 5\\
\hline
100 & 50 &  0.23 & 0.19 & 0.21 & 0.22 & 0.15 & 23 & 19 & 21 & 22 & 15\\
 & 200 & 0.23 & 0.19 & 0.21 & 0.21 & 0.16 & 23 & 19 & 21 & 21 & 16\\
 & 400 & 0.22 & 0.20 & 0.21 & 0.21 & 0.17 & 22 & 20 & 20 & 21 & 17\\
\hline
\end{tabular}
\end{table}
For the compound symmetry model the covariance matrix $\mathbf{V}$ is taken from Table 6 in \cite{kle} (left part, CS model):
\begin{equation}\label{cs}
\mathbf{V}=\left(\begin{array}{ccccc}
	308 & 270 & 270 & 270 & 270 \\ 270 & 308 & 270 & 270 & 270 \\ 270 & 270 & 308 & 270 & 270 \\ 270 & 270 & 270 & 308 & 270 \\ 270 & 270 & 270 & 270 & 308
\end{array}\right).
\end{equation}
For this model we obtain optimal designs $J_i=J/5$, $i=1, \dots, 5$, which is in accordance with Theorem~\ref{csa}.

\subsection{Weighted A-criterion}

For the weighted A-criterion we used the coefficients $\ell_1=813685$, $\ell_2=432716$, $\ell_3=477365$, $\ell_4=995298$, $\ell_5=1174818$, which correspond to the areas of the sub-regions, respectively, as determined from a digitized version of the map shown in \cite{kle}.

Table~3 summarizes the results for optimal designs in the factor-analytic model.
\begin{table}
\caption{Optimal numbers of locations per sub-region with respect to weighted A-criterion for FA model for different values of the total  number of locations $J$ and the error variance $\sigma^2$}
\vspace{0.2cm}
\centering
\begin{tabular}{|c|c|ccccc|ccccc|}
\hline
$J$ & $\sigma^2$ & \multicolumn{5}{c|}{Approximate design} & \multicolumn{5}{c|}{Exact design}\\
 & & $w_1$ & $w_2$ & $w_3$ & $w_4$ & $w_5$ & $J_1$ & $J_2$ & $J_3$ & $J_4$ & $J_5$\\
\hline
20 & 50 & 0.35 & 0.03 & 0.10 & 0.37 & 0.15 & 7 & 1 & 2 & 7 & 3\\
 & 200 & 0.33 & 0.05 & 0.11 & 0.35 & 0.16 & 7 & 1 & 2 & 7 & 3\\
 & 400 & 0.30 & 0.08 & 0.13 & 0.32 & 0.18 & 6 & 2 & 2 & 6 & 4\\
\hline
40 & 50 & 0.28 & 0.09 & 0.13 & 0.30 & 0.19 & 11 & 4 & 5 & 12 & 8\\
 & 200 & 0.27 & 0.10 & 0.14 & 0.29 & 0.20 & 11 & 4 & 6 & 11 & 8\\
 & 400 & 0.27 & 0.10 & 0.14 & 0.29 & 0.20 & 10 & 5 & 6 & 11 & 8\\
\hline
100 & 50 & 0.24 & 0.13 & 0.15 & 0.26 & 0.22 & 24 & 13 & 15 & 26 & 22\\
 & 200 & 0.24 & 0.13 & 0.15 & 0.25 & 0.22 & 24 & 13 & 15 & 25 & 23\\
 & 400 & 0.23 & 0.14 & 0.16 & 0.25 & 0.23 & 23 & 14 & 15 & 25 & 23\\
\hline
\end{tabular}
\end{table}
The optimal designs in the compound symmetry model are presented in Table~4.
\begin{table}
\caption{Optimal numbers of locations per sub-region with respect to weighted A-criterion for CS model for different values of the total number of locations $J$ and the error variance $\sigma^2$}
\vspace{0.2cm}
\centering
\begin{tabular}{|c|c|ccccc|ccccc|}
\hline
$J$ & $\sigma^2$ & \multicolumn{5}{c|}{Approximate design} & \multicolumn{5}{c|}{Exact design}\\
 & & $w_1$ & $w_2$ & $w_3$ & $w_4$ & $w_5$ & $J_1$ & $J_2$ & $J_3$ & $J_4$ & $J_5$\\
\hline
20 & 50 & 0.22 & 0.10 & 0.12 & 0.26 & 0.30 & 4 & 2 & 3 & 5 & 6\\
 & 200 & 0.21 & 0.11 & 0.12 & 0.26 & 0.30 & 4 & 2 & 3 & 5 & 6\\
 & 400 & 0.21& 0.11 & 0.13 & 0.26 & 0.29 & 4 & 2 & 3 & 5 & 6\\
\hline
40 & 50 & 0.21 & 0.12 & 0.13 & 0.25 & 0.29 & 9 & 5 & 5 & 10 & 11\\
 & 200 & 0.21 & 0.12 & 0.13 & 0.25 & 0.28 & 9 & 5 & 5 & 10 & 11\\
 & 400 & 0.21 & 0.13 & 0.14 & 0.25 & 0.28 & 9 & 5 & 5 & 10 & 11\\
\hline
100 & 50 & 0.21 & 0.13 & 0.14 & 0.24 & 0.27 & 21 & 13 & 15 & 24 & 27\\
 & 200 & 0.21 & 0.14 & 0.15 & 0.24 & 0.27 & 21 & 13 & 15 & 24 & 27\\
 & 400 & 0.21 & 0.14 & 0.15 & 0.24 & 0.26 & 21 & 14 & 15 & 24 & 26\\
\hline
\end{tabular}
\end{table}
Note that optimal designs in Tables~3 and 4 depend on the total number of locations $J$ and the error variance $\sigma^2$ and are in general not equal to the ratios $w_i=\ell_i/\ell$, $\ell=\sum_{i=1}^P{\ell_i}$. The results for the factor-analytic and compound symmetry models are different, illustrating that the optimal designs depend on the variance-covariance structure of genotypic effects within sub-regions. In case of compound symmetry optimal designs are less sensitive to $J$ and $\sigma^2$ than for the factor-analytic model.

All computations were performed using the procedures \textit{od.SOCP} and \textit{od.MISOCP} from the package \textit{OptimalDesign} in \textit{R}
for optimal approximate and exact designs, respectively, as proposed in \cite{harm}. Note that the exact designs obtained using \textit{od.MISOCP} are optimal in the class of exact designs for the model under investigation for the given data. 

\subsection{Enforcing the same efficiency in each sub-region}
When using the CS structure \eqref{cs} in $\mathbf{D}$, we obtained the trivial solution $w_i=0.2$, $i=1, \dots, 5$. When using the factor-analytic structure in \eqref{fa}, the solutions were as shown in Table~5. The exact designs were obtained by efficient rounding (see \cite{puk1}).
\begin{table}
\caption{
Optimal numbers of locations enforcing the same efficiency in each sub-region for FA model for different values of the total number of locations $J$ and the error variance $\sigma^2$}
\vspace{0.2cm}
\centering
\begin{tabular}{|c|c|ccccc|ccccc|}
\hline
$J$ & $\sigma^2$ & \multicolumn{5}{c|}{Approximate design} & \multicolumn{5}{c|}{Exact design}\\
 & & $w_1$ & $w_2$ & $w_3$ & $w_4$ & $w_5$ & $J_1$ & $J_2$ & $J_3$ & $J_4$ & $J_5$\\
\hline
20 & 50 & 0.342 &	0.148	& 0.205	& 0.302	& 0.003 & 6	& 3	& 4	& 6	& 1\\
 & 200 & 0.320	& 0.158	& 0.209	& 0.284	& 0.029 & 6	& 3	& 4	& 6	& 1\\
 & 400 & 0.291	& 0.172	& 0.211	& 0.260	& 0.065 & 6	& 3	& 4	& 5	& 2\\
\hline
40 & 50 & 0.274	& 0.179	& 0.212	& 0.247	& 0.088 & 11 & 7 & 8 & 10	& 4\\
 & 200 & 0.262	& 0.183	& 0.212	& 0.239	& 0.104 & 10 & 7 & 9 & 10	& 4\\
 & 400 & 0.247	& 0.189	& 0.211	& 0.228	& 0.125 & 10 & 8 & 8 & 9 & 5\\
\hline
100 & 50 & 0.231	& 0.194	& 0.209	& 0.217	& 0.150 & 23 & 19 & 21 & 22 & 15\\
 & 200 & 0.226 &	0.195	& 0.208	& 0.214	& 0.157 & 22 & 20 & 21 & 21 & 16\\
 & 400 & 0.219 & 0.197	& 0.206	& 0.210	& 0.167 & 22 & 20 & 20 & 21 & 17\\
\hline
\end{tabular}
\end{table}
We used the function \textit{nlphqn} in \textit{SAS/IML} to solve the nonlinear system of equations in \eqref{con}. 

\section{Discussion}
In crop research, the design of experiments is mainly considered in the context of a single environment and assuming that treatment effects are fixed (\cite{joh}; \cite{mea}). Recently, there has been an increased interest in design for experiments when treatments are modeled as correlated random effects using kinship or pedigree information (\cite{bue}, \cite{bue2}; \cite{cul}; \cite{but}). Also, the design of multi-environment trials has been considered in a few papers, most notably in the context of partially replicated (p-rep) trials (\cite{wil}), but also in broader contexts (\cite{gon}). To the best of our knowledge, however, the problem of allocation of location numbers in subdivided TPE has never been considered in any detail. The problem is reminiscent of optimal allocation in stage-wise sampling based on a nested random-effects model (\cite{sne}, p. 529) but the approach needed is more complex due to the linear mixed model involving several fixed and random effects and the optimization being targeted to the prediction of random effects. There is also some relation to small-area estimation in surveys where mixed models are used for estimation (\cite{jia}, \cite{tor}), but design in that context is not usually targeted at individual domains or small areas, and there is no notion of a larger number of treatments as in MET. 

Our main focus was the optimal allocation of locations for different sub-regions with respect to the estimation of genotype effects and pairwise linear contrasts for A- and particular linear (weighted A-) criteria. The proposed approach is based on the method of best linear unbiased prediction (BLUP). For our problem Bayesian optimal designs for a transformed covariance matrix of genotype effects turn out to be optimal. In the example we considered two kinds of models with respect to the covariance structure: first-order factor-analytic and compound symmetry. The resulting designs in both cases depend on the covariance structure, observational errors variance and the total number of locations in all sub-regions. The only exception is the standard A-criterion for compound symmetry: in this case balanced designs are optimal.

Our criterion integrates the efficiencies for BLUPs of interest across sub-regions. There are three variations to this approach. Two of them take a weighted or unweighted average across sub-regions, and optimization typically leads to allocations that imply unequal efficiency between sub-regions. The third approach imposes the additional restriction that efficiency be the same for each sub-region. We think this latter approach is particularly relevant when several administrative entities (federal states or countries) join forces to link up their trialling networks for cross-boundary analysis. For such efforts to be successful it is vital that the benefit, in terms of efficiency gain compared to independent analysis, can be split equally between the administrative entities involved.


\appendix

\section{Proof of Lemma~\ref{lem}}

To make use of the theoretical results that are available in the literature (see e.\,g. \cite{hen3}) for the prediction of random parameters we will represent the model \eqref{mod} as a particular case of the general LMM 
\begin{equation}\label{lmm}
\mathbf{Y}=\mathbf{X} \mbox{\boldmath{$\beta $}} + \mathbf{Z} \mbox{\boldmath{$\zeta$}} + \mbox{\boldmath{$\epsilon$}}
\end{equation}
with design matrices $\mathbf{X}$ and $\mathbf{Z}$ for the fixed effects and the random effects, respectively. In \eqref{lmm}, $\mbox{\boldmath{$\beta $}}$ denotes the fixed effects and $\mbox{\boldmath{$\zeta $}}$ are the random effects. The random effects and the observational errors $\mbox{\boldmath{$\epsilon $}}$ are assumed to have zero mean and to be all uncorrelated with positive definite covariance matrices $\mbox{Cov}\,(\mbox{\boldmath{$\zeta $}})=\mathbf{G}$ and $\mbox{Cov}\,(\mbox{\boldmath{$\epsilon $}})=\mathbf{R}$, respectively. Random effects and observational errors are assumed to be uncorrelated.

To present model \eqref{mod} in form \eqref{lmm} we use the following steps:
\begin{equation*}
\mathbf{Y}_{ijk}=\mathds{1}_2\,\mu_i+\mathds{1}_2\,\alpha_{ik}+\mathds{1}_2\,\lambda_{ij}+\mathds{1}_2\,\gamma_{ijk}+\mathbf{b}_{ij}+\mbox{\boldmath{$\varepsilon$}}_{ijk},\quad i=1 \dots P,\quad k=1, \dots, K,\quad j=t_{i-1}+1, \dots, t_i. 
\end{equation*}
\begin{equation*}
\mathbf{Y}_{ik}=\mathds{1}_{2J_i}\,\mu_i+\mathds{1}_{2J_i}\,\alpha_{ik}+(\mathbb{I}_{J_i}\otimes\mathds{1}_2)\,\mbox{\boldmath{$\lambda$}}_{i}+(\mathbb{I}_{J_i}\otimes\mathds{1}_2)\,\mbox{\boldmath{$\gamma$}}_{ik}+\mathbf{b}_{i}+\mbox{\boldmath{$\varepsilon$}}_{ik},\quad i=1 \dots P,\quad k=1, \dots, K,
\end{equation*}
where  $\mbox{\boldmath{$\lambda$}}_i=(\lambda_{i1}, \dots, \lambda_{iJ_i})^\top$ and $\mbox{\boldmath{$\gamma$}}_{ik}=(\gamma_{it_{i-1}+1k}, \dots, \gamma_{it_ik})^\top$.
\begin{equation*}
\mathbf{Y}_{k}=\mathbf{F}\mbox{\boldmath{$\mu$}}+\mathbf{F}\mbox{\boldmath{$\alpha$}}_{k}+\mathbf{H}\mbox{\boldmath{$\lambda$}}+\mathbf{H}\mbox{\boldmath{$\gamma$}}_k+\mathbf{b}+\mbox{\boldmath{$\varepsilon$}}_{k},\quad k=1, \dots, K,
\end{equation*}
where
$\mathbf{H}=(\mathbb{I}_J\otimes\mathds{1}_2)$, $\mbox{\boldmath{$\mu$}}=(\mu_1, \dots, \mu_P)^\top$, $\mbox{\boldmath{$\lambda$}}=(\mbox{\boldmath{$\lambda$}}_1^\top, \dots, \mbox{\boldmath{$\lambda$}}_P^\top)^\top$ and $\mbox{\boldmath{$\gamma$}}_k=(\mbox{\boldmath{$\gamma$}}_{1k}^\top, \dots, \mbox{\boldmath{$\gamma$}}_{Pk}^\top)^\top$.
\begin{equation*}
\mathbf{Y}=(\mathds{1}_K\otimes\mathbf{F})\mbox{\boldmath{$\mu$}}+(\mathbb{I}_K\otimes\mathbf{F})\mbox{\boldmath{$\alpha$}}+(\mathds{1}_K\otimes\mathbf{H})\mbox{\boldmath{$\lambda$}}+(\mathbb{I}_K\otimes\mathbf{H})\mbox{\boldmath{$\gamma$}}+(\mathds{1}_K\otimes\mathbb{I}_{2J})\mathbf{b}+\mbox{\boldmath{$\varepsilon$}},
\end{equation*}
where $\mbox{\boldmath{$\gamma$}}=(\mbox{\boldmath{$\gamma$}}_1, \dots, \mbox{\boldmath{$\gamma$}}_K)^\top$.

The latter equation may be alternatively written as
\begin{equation}\label{mod1}
\mathbf{Y}=(\mathds{1}_K\otimes\mathbf{F})\mbox{\boldmath{$\mu$}}+(\mathbb{I}_K\otimes\mathbf{F})\mbox{\boldmath{$\alpha$}}+\tilde{\mbox{\boldmath{$\varepsilon$}}},
\end{equation}
where $\tilde{\mbox{\boldmath{$\varepsilon$}}}:=(\mathds{1}_K\otimes\mathbf{H})\mbox{\boldmath{$\lambda$}}+(\mathbb{I}_K\otimes\mathbf{H})\mbox{\boldmath{$\gamma$}}+(\mathds{1}_K\otimes\mathbb{I}_{2J})\mathbf{b}+\mbox{\boldmath{$\varepsilon$}}$. Model \eqref{mod1} is of form \eqref{lmm} with $\mathbf{X}=(\mathds{1}_K\otimes\mathbf{F})$, $\mathbf{Z}=(\mathbb{I}_K\otimes\mathbf{F})$, $\mathbf{G}=\mathrm{Cov}(\mbox{\boldmath{$\alpha$}})=\sigma^2\mathbb{I}_K\otimes\mathbf{D}$ and
\begin{equation*}
\mathbf{R}=\mathrm{Cov}(\tilde{\mbox{\boldmath{$\varepsilon$}}})=\sigma^2((v_1\mathds{1}_K\mathds{1}_K^\top+v_2\mathbb{I}_K)\otimes\mathbb{I}_J\otimes\mathds{1}_2\mathds{1}_2^\top+(\mathds{1}_K\mathds{1}_K^\top\otimes\mathbb{I}_{2J})v_3+\mathbb{I}_{2JK}).
\end{equation*} 

According to \cite{hen3} the MSE matrix of the BLUP of the random effects $\mbox{\boldmath{$\zeta$}}$ (which corresponds to $\mbox{\boldmath{$\alpha$}}$ in our model \eqref{mod1}) is given by
\begin{equation}\label{c22}
\mathrm{Cov}(\hat{\mbox{\boldmath{$\zeta$}}}-\mbox{\boldmath{$\zeta$}})=\left(\mathbf{Z}^\top \mathbf{R}^{-1}\mathbf{Z} +\mathbf{G}^{-1}-\mathbf{Z}^\top \mathbf{R}^{-1}\mathbf{X} (\mathbf{X}^\top \mathbf{R}^{-1}\mathbf{X})^{-}\mathbf{X}^\top \mathbf{R}^{-1}\mathbf{Z}\right)^{-1},
\end{equation}
where $\mathbf{A}^{-}$ denotes a generalized inverse of $\mathbf{A}$. 
With this formula we obtain MSE matrix \eqref{mse}. 
Then using the relation $\mbox{\boldmath{$\theta$}}^{k,k'}=((\mathbf{e}_k-\mathbf{e}_{k'})^\top\otimes\mathbb{I}_P)\,\mbox{\boldmath{$\alpha$}}$ between the genotype effects and the pairwise contrasts we receive formula \eqref{mse2}.

\section*{Acknowledgment}

This research was partially supported by grant SCHW 531/16 of the German Research Foundation (DFG).
The authors are grateful to Waqas Malik (University of Hohenheim) for determining the areas of the five breeding zones for maize in India based on a digitized map.

\bibliographystyle{natbib}
\bibliography{prus6}

\end{document}